\documentclass[11pt]{article}

\usepackage[a4paper,margin=1in]{geometry}
\usepackage{setspace}
\usepackage{parskip}    

\usepackage[T1]{fontenc}
\usepackage{lmodern}
\usepackage{amsmath,amssymb,amsthm}
\usepackage{bm}
\usepackage{mathtools}
\usepackage{stmaryrd}   
\usepackage{mathpartir} 
\usepackage{enumitem}
\usepackage{microtype}
\usepackage{xcolor}
\usepackage{tikz}
\usepackage{thmtools}   
\usepackage[colorlinks=true,allcolors=blue]{hyperref}
\usepackage[capitalise,noabbrev]{cleveref}  
\newcommand{\rocqrepo}{https://github.com/park-sewon/Diaconescu/blob/arxiv}

\newcommand{\githublink}[3][\rocqfile]{\href{\rocqrepo/#1\#L#2}{\texttt{#3}}}

\newcommand{\tytype}{\mathsf{Type}}
\newcommand{\typrop}{\mathsf{Prop}}
\newcommand{\El}{\mathrm{El}}
\newcommand{\Id}{\mathsf{Id}}
\newcommand{\refl}{\mathsf{refl}}
\newcommand{\W}{\mathsf{W}}
\newcommand{\rec}{\mathsf{rec}}
\newcommand{\lift}{\mathsf{lift}}
\newcommand{\Quot}{\mathsf{Quot}}
\newcommand{\mk}{\mathsf{mk}}
\newcommand{\qlift}{\mathsf{lift}}

\newcommand{\choice}{\mathsf{choice}}
\newcommand{\supp}[1]{\lVert #1 \rVert}

\newcommand{\ex}[1]{\exists #1.\,}   
\newcommand{\cst}{\bm{\Delta}}  
\newcommand{\pproj}{\mathsf{p}}  
\newcommand{\qvar}{\mathsf{q}}   
\newcommand{\Wsup}{\mathsf{sup}} 
\newcommand{\pit}[1]{\Pi_{#1}\,}  
\newcommand{\all}[1]{\forall #1.\,}  
\newcommand{\defto}{\mathrel{:\equiv}}  
\newcommand{\of}{{:}}   
\newcommand{\lto}{\Rightarrow}       
\newcommand{\liff}{\Leftrightarrow}  

\newcommand{\Lo}{L}   
\newcommand{\decode}[1]{\llcorner #1 \lrcorner}
\newcommand{\sset}{\mathbf{Stg}}
\newcommand{\SPi}{\bm{\Pi}}        
\newcommand{\SSigma}{\bm{\Sigma}}  

\newcommand{\functy}{\mathsf{Ty}}
\newcommand{\functm}{\mathsf{Tm}}
\newcommand{\Ty}{\functy}
\newcommand{\Tm}{\functm}  

\DeclareUnicodeCharacter{2080}{\textsubscript{0}}

\declaretheorem[name=Theorem, style=plain, numberwithin=section]{thm}
\declaretheorem[name=Lemma,       style=plain, sibling=thm]{lem}

\declaretheorem[name=Definition,  style=definition, sibling=thm]{defi}

\declaretheorem[name=Remark,      style=remark, sibling=thm]{rem}

\numberwithin{equation}{section}

\makeatletter
\@for\@tempa:=thm,lem,prop,cor,conj,defi,exa,nota,rem\do{%
  \expandafter\xdef\csname theH\@tempa\endcsname{\noexpand\thesection.\noexpand\arabic{thm}}}
\makeatother

\title{Global Choice without Excluded Middle \\in Dependent Type Theory}
\hypersetup{pdftitle={Global Choice without Excluded Middle in
Dependent Type Theory}}

\author{
  Sewon Park\\
  \small Institute of Mathematics, Physics and Mechanics, Ljubljana, Slovenia\\
  \small\texttt{sewon@sewonpark.com}
}

\date{}

\begin{document}

\maketitle

\begin{abstract}
Diaconescu's theorem derives excluded middle from the axiom of choice, using propositional extensionality.
We show that in dependent type theory without propositional extensionality, global choice alone does not imply excluded middle.
We consider extensional Martin-Löf type theory with an impredicative universe of propositions included in the universe of types, extended with a global choice operator.
We construct a consistent model of this type theory and show that the model refutes excluded middle and propositional extensionality.
The model is built from staged sets, sets whose elements appear at stages of a well-order.
It further validates proof irrelevance, inductive families, quotients with soundness, and a cumulative hierarchy of universes, which together cover a large fragment of the type theory of the proof assistant Lean.
Even together with these features, global choice does not imply excluded middle.

  \medskip
  \noindent\textbf{Keywords:} Diaconescu's theorem, global choice, propositional extensionality, dependent type theory, models of type theory, Lean proof assistant
\end{abstract}


\section{Introduction}\label{sec:intro}

Diaconescu's theorem states that the axiom of choice implies the law of excluded middle in every topos~\cite{Dia75}.
The proof uses propositional extensionality, the principle that equivalent propositions are equal.
Propositional extensionality holds in all toposes, but in dependent type theory it is an optional axiom.
For example, the core library of Lean~\cite{dMKAvDvR15} assumes propositional extensionality and uses it to derive excluded middle from global choice~\cite{LeanClassical}.
We show that in dependent type theory global choice alone does not imply excluded middle.

We consider extensional Martin-Löf type theory~\cite{ML84} with an impredicative universe $\typrop$ of propositions.
The universe $\typrop$ is included in the universe $\tytype$ of types, with identity types $\Id_A(a, b) : \typrop$ for a type $A$ and $a, b : A$.
In this type theory, global choice is an operator
\begin{equation*}
  \choice : \pit{X\of{}\tytype} \supp{X} \to X\;,
\end{equation*}
where $\supp{X} :\equiv \all{P\of{}\typrop} (X \to P) \to P$ is the propositional truncation of~$X$, encoded using the impredicativity of~$\typrop$.

\begin{thm}\label{thm:main}
The type theory above with a global choice operator is consistent with
\begin{equation*}
  \neg\,\all{P\of{}\typrop} P \vee \neg P
  \qquad\text{and}\qquad
  \neg\,\all{P,Q\of{}\typrop} (P \liff Q) \lto \Id_\typrop(P, Q)\;.
\end{equation*}
\end{thm}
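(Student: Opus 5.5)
We prove \cref{thm:main} semantically, using the staged-set model sketched in the abstract. Fix a well-order $(I,\le)$ of stages; concretely $\omega+1$ or $\omega_1$, with at least two stages below a top one. A \emph{staged set} is a set $A$ together with a map $\sigma_A : A \to I$ recording the stage at which each element appears. Types are interpreted as staged sets. A proposition is interpreted as a staged set with at most one element; equivalently it is just an optional stage $s \in I \cup \{\infty\}$, read as ``true from stage $s$ on''. We build a category with families in which:
\begin{itemize}
\item terms of a type in context are stage-monotone sections, i.e.\ a term may not return an element appearing later than its inputs;
\item $\Pi$, $\Sigma$, identity types and $\W$-types are defined stagewise;
\item $\typrop$ is the staged set of all subsingleton staged sets.
\end{itemize}
Since $\typrop$ has a set of elements, impredicativity comes for free. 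Identity types are extensional: $\Id_A(a,b)$ is inhabited at stage $\max(\sigma a,\sigma b)$ when $a=b$ and empty otherwise. This validates equality reflection.

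The first step is soundness of the type formers: stability under substitution, and the $\beta\eta$-laws. The stage of a function should be the supremum of what it needs, and a function element at stage $s$ must send arguments to outputs at stage at most $\max(s,\sigma a)$. Consistency then follows from the empty type $\bot : \typrop$, which has no elements at any stage, so no closed term of it exists.

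The second step, which I expect to be the main obstacle, is interpreting $\choice$. Given $X$ and a proof of $\supp{X}$ at some stage $s$, we need a canonical element of $X$ at stage $\le s$ (or at least at $s$), uniformly and stable under substitution. The plan is to fix, once and for all, a global well-ordering of the underlying universe of sets (working in ZFC with global choice, or in a Grothendieck-universe hierarchy). We then pick the least element of $X$ of minimal stage.

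The key check is that $\supp{X}$ holding at stage $s$ really forces $X$ to have an element of stage $\le s$. We test this against the proposition $P$ that is ``true from the least stage of an element of $X$''. Since the impredicative $\typrop$ contains all such $P$, this works, and it is precisely where the stage structure matters. Substitution stability holds because selection depends only on the set $X$ and its stage map, not on the context.

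The third step refutes the two principles, internally via closed terms of their negations. Let $P_1$ be the proposition appearing at stage $1$. Then $P_1 \vee \neg P_1$ has no element at stage $0$: $P_1$ is absent there, and $\neg P_1$ would have to be absent at all later stages, which it is not, since $P_1$ appears at stage $1$. Hence no uniform witness of $\all{P\of{}\typrop} P\vee\neg P$ exists at stage $0$, so its negation is inhabited. For propositional extensionality, take two distinct elements $P\neq Q$ of $\typrop$ that are logically equivalent, for instance two different one-element staged sets appearing at the same stage; we want the inhabited $\Id_\typrop(P,Q)$ to be empty, with $P\liff Q$ inhabited. To make distinct but equivalent propositions available, $\typrop$ is taken as a set of codes, not extensionally identified. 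The delicate point is arranging the stage bookkeeping so that these refutations are inhabited at stage $0$, which gives closed terms.
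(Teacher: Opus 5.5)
Your interpretation of choice is the paper's: pick an element of least stage, justified by testing $\supp{X}$ against the proposition true from the least stage of $X$. The refutation of excluded middle, however, has a genuine gap.

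From the failure of $P_1 \vee \neg P_1$ at stage $0$ you conclude that $\neg\,\all{P\of{}\typrop} P \vee \neg P$ is inhabited at stage $0$. That inference is invalid here. A negation holds at stage $0$ only if the negated statement holds at \emph{no} stage; in the paper's computation, $\neg p = 0$ iff $p = \infty$. So you must show that $\all{P\of{}\typrop} P \vee \neg P$ fails at every stage $n$. To do this, instantiate with the proposition that becomes true at a stage after $n$. This requires a well-order without a greatest element, such as $\mathbb{N}$, and then the value is $\sup \Lo = \infty$. With $\omega+1$, or any order with a top stage as you suggest, the claimed refutation is false. At the top stage every proposition that ever becomes true holds, and every other one is refuted. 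So $\all{P\of{}\typrop} P \vee \neg P$ holds there, and its negation has no closed term.

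Second, you never fix the stages at which propositions appear as elements of $\typrop$, and both your choice step and your excluded-middle step depend on them. Every element of $\typrop$ must be born at stage $0$. Then instantiating a quantifier over $\typrop$ does not raise the stage, and $\all{r\of{}\typrop} F(r) = \sup_{r} F(r)$. It is not enough that $\typrop$ contains your test proposition $R$. Suppose each proposition were born at its own truth stage. Then $\supp{X}$ holding at stage $s$ would yield $R$ only at stage $\max(s, \nu(X))$. In fact a short computation with the paper's formula for $\forall$ then gives $\supp{X}$ true at stage $0$ for every nonempty $X$, and no choice operator exists (take $X = \decode{1}$). Likewise, your instance $P_1$ must be available at stage $0$ for the excluded-middle argument to use it there.

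Your refutation of propositional extensionality does work, using two distinct codes with the same truth value. It is only a different route: the paper needs no duplicate codes. With $\typrop$ the constant staged set of truth values, $\Id_\typrop(p,q)$ is decided at stage $0$, while $p \liff q$ holds only from $\max(p,q)$. So $p = 0$ and $q = 1$ already refute it.
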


We prove \cref{thm:main} with a model built from staged sets.
We first describe the model informally.
\Cref{sec:staged,sec:model,sec:logic} give the precise definitions.
A staged set is a set whose elements appear at stages $0, 1, 2, \dots$.
We interpret a proposition as the stage at which it becomes true, or as $\infty$ if it never becomes true.
Thus two propositions $p$ and $q$ that become true at different stages are equivalent from stage $\max(p, q)$ on.
The universe $\typrop$ is the staged set of the stages together with $\infty$, each appearing at stage $0$.
Hence the identity proposition $\Id_\typrop(p, q)$ compares the stages themselves:
it holds from stage $0$ on when $p$ and $q$ coincide, and it never holds otherwise.
For example, for $p :\equiv 0$ and $q :\equiv 1$, the equivalence $p \liff q$ holds from stage $1$ on, while $\Id_\typrop(p, q)$ never holds.
Hence propositional extensionality fails in the model.

Excluded middle fails for a similar reason.
For a proposition $p$ that becomes true, the disjunction $p \vee \neg p$ holds only from stage $p$ on.
To prove $\all{p\of{}\typrop} p \vee \neg p$ at a stage $n$, we assume $p \of{} \typrop$.
Since every proposition in $\typrop$ appears at stage $0$, assuming $p$ does not increase the stage.
Hence $p \vee \neg p$ has to hold at stage $n$ for every $p$.
When there is no last stage, this fails at every stage $n$.

Types are staged sets as well.
The truncation $\supp{X}$ of a type $X$ becomes true at the stage at which the first element of $X$ appears.
The choice operator maps each nonempty type to one of the elements that appear first, chosen in the classical metatheory.
Hence the chosen element is available as soon as $\supp{X}$ becomes true.

\Cref{thm:main} applies to intensional Martin-Löf type theory as well, because the rules of the intensional theory are derivable in the extensional one.
Since the model is a counter-model, every further rule that it validates strengthens the result.
The model also validates proof irrelevance, inductive families, quotients with soundness, and a cumulative hierarchy of universes that contain $\typrop$.
That means global choice does not imply excluded middle even together with these rules.
These rules cover a large fragment of the type theory of Lean, as specified in Carneiro's thesis~\cite{Carn19}.

\Cref{sec:staged} introduces staged sets.
\Cref{sec:model} lists the extra rules of the type theory for completeness, and then constructs the model as a category with families (CwF)~\cite{Hof97} of families of staged sets.
\Cref{sec:logic} computes the logic of the model and proves \cref{thm:main}.
\Cref{sec:more} shows further features that the model validates and relates them to Lean's type theory.
\Cref{sec:related} discusses related work.
\Cref{app:proofs} lists the proof obligations of the model, as verified in Rocq~\cite{Rocq}.

\section{Staged sets}\label{sec:staged}
Fix a nonempty well-ordered set $\Lo$.
We write $0$ for the least element of $\Lo$.
\begin{defi}
  A \emph{staged set} is a set $X$ together with a function $\nu : X \to \Lo$.
  A \emph{staged map} $f : (X, \nu) \to (Y, \mu)$ is a function $f : X \to Y$ such that $\mu(f(x)) \leq \nu(x)$ for all $x \in X$.
\end{defi}
We call $\nu(x)$ the \emph{birthday} of $x$, and say that $x$ is \emph{available} at a stage $\ell$ if $\nu(x) \le \ell$.
A staged map thus sends each element to one that is available when the element is born.

For a staged set $X$, we write $|X|$ for the underlying set and $\nu_X$ for the function $\nu$. When there is no ambiguity, we write $x \in X$ for $x \in |X|$ and $\nu(x)$ for $\nu_X(x)$.
For a set $X$, let $\cst X \defto (X, x \mapsto 0)$.

A \emph{family} $Y$ \emph{of staged sets} over a staged set $X$ assigns a staged set $Y(x)$ to each $x \in |X|$.

For a family $Y$ over $X$:
\begin{enumerate}
  \item the \emph{staged sum} $\SSigma(X, Y)$ is the staged set with
  \begin{equation*}
  \lvert \SSigma(X, Y)\rvert \defto \{(x, y) \mid x \in |X| \land y \in |Y(x)|\}
  \ \text{and}\ %
  \nu_{\SSigma(X, Y)}(x, y) \defto \max(\nu_X(x), \nu_{Y(x)}(y))\;;
  \end{equation*}
  \item the \emph{staged product} $\SPi(X, Y)$ is the staged set whose underlying set consists of the functions $f$ with $f(x) \in |Y(x)|$ for all $x \in |X|$ such that, for some $\ell \in \Lo$,
  \begin{equation*}
  \nu_{Y(x)}(f(x)) \leq \max(\nu_X(x), \ell) \quad\text{for all } x \in |X|\;,
  \end{equation*}
  and $\nu_{\SPi(X, Y)}(f)$ is the least such $\ell$, which exists since $\Lo$ is well-ordered;
  \item a \emph{section} is an element of $\SPi(X, Y)$ whose birthday is $0$.
\end{enumerate}
In words, the staged set $\SPi(X, Y)$ consists of the set-theoretic dependent functions $f$ that map each $x \in |X|$ to an element of $|Y(x)|$ that is available once both $x$ and $f$ are.
Not all set-theoretic dependent functions are included in $\SPi(X, Y)$.
For example, for $\Lo = \mathbb{N}$, $X = \cst\mathbb{N}$ and the constant family $Y$ with value $(\mathbb{N}, m \mapsto m)$, the function $n \mapsto n$ sends each $n$, born at $0$, to $n$, born at $n$.
These birthdays are unbounded, and hence $n \mapsto n$ has no birthday and is not an element of $\SPi(X, Y)$.

Staged sets are, up to equivalence, the presheaves on $\Lo^{\mathrm{op}}$ with injective restriction maps.
The presheaf of a staged set $X$ maps a stage $\ell$ to the set $\{x \in |X| \mid \nu(x) \le \ell\}$ of elements available at $\ell$.
Since $\Lo$ is a chain, these are the $\neg\neg$-separated presheaves, and staged sets form a quasitopos~\cite[Theorem~A4.4.5]{Johnstone:592033}.
(Under this equivalence, $\cst X$ is the constant presheaf with value $X$, which explains the notation.)
However, we interpret the type theory with the concrete description of staged sets, instead of the structure of the quasitopos.
First, staged sets are simple to calculate with directly.
Second, the model differs from the structure of the quasitopos: $\typrop$ is not the strong subobject classifier $\cst\{0, 1\}$, which would validate excluded middle, and type families are not objects of the slice category, as \cref{sec:why-precomposition} explains.

\subsection{Truth values}
Let $\Lo_\infty \defto \Lo \cup \{\infty\}$, with $\infty$ as the top element.
For $p \in \Lo_\infty$, let
\begin{equation*}
\decode{p} \defto
\begin{cases}
(\{\ast\},\ \ast \mapsto p) & \text{if } p \in \Lo\;,\\
\emptyset & \text{if } p = \infty\;.
\end{cases}
\end{equation*}
Up to isomorphism, every staged set with at most one element is $\decode{p}$ for a unique $p \in \Lo_\infty$. These are all the subobjects of the terminal staged set $\decode{0}$.
We read $p$ as a truth value that holds from stage $p$ on, that is, at every stage $\ell \ge p$.
Thus $0$ is true from the start, $\infty$ is never true, and a smaller truth value is truer.
Since $\Lo$ is well-ordered, every subset of $\Lo_\infty$ has a supremum.

The \emph{birthday} of a staged set $X$ is the earliest birthday of the elements of $X$,
\begin{equation*}
\nu(X) \defto \min\{\nu(x) \mid x \in |X|\} \in \Lo_\infty\;,
\end{equation*}
with $\min \emptyset = \infty$.

\section{The model}\label{sec:model}
Hofmann~\cite{Hof97} explains in detail how a CwF interprets a type theory and works out the conditions for several type formers.
We give the constructions of the CwF and the non-routine verifications.
\Cref{app:proofs} lists the remaining verifications.

\subsection{The extra rules}
The type theory of \cref{thm:main} extends the rules of~\cite{ML84} as follows.
The universes $\tytype$ and $\typrop$ are presented à la Tarski, with decoding $\El$.
The universe $\typrop$ has a code in $\tytype$, and an inclusion
\begin{equation*}
\iota : \typrop \to \tytype\;, \qquad \El(\iota(P)) = \El(P) \quad \text{for } P : \typrop\;.
\end{equation*}
The universe $\typrop$ is impredicative: for every type $A$ and every $P : A \to \typrop$,
\begin{equation*}
\all{x\of{}A} P\,x : \typrop\;,
\end{equation*}
with introduction and elimination rules as for $\Pi$-types, including the judgmental $\beta$- and $\eta$-rules.
For a type $A$ and $a, b : A$, the identity proposition
\begin{equation*}
\Id_A(a, b) : \typrop
\end{equation*}
decodes to the identity type of~\cite{ML84}, with equality reflection.
The global choice operator has the type given in \cref{sec:intro}.
We interpret global choice in \cref{sec:logic}, where we also refute excluded middle and propositional extensionality.

\subsection{Category with families}

The CwF of families of staged sets has the following structure.
\begin{enumerate}
  \item A \emph{context} $\Gamma$ is a staged set, and a context morphism is a staged map.
  The \emph{empty context} is the terminal staged set $\decode{0}$.
  \item A \emph{type} over $\Gamma$ is a family of staged sets over $\Gamma$.
  \item A \emph{term} of type $A$ over $\Gamma$ is a section of the family $A$.
  \item The \emph{context extension} of $\Gamma$ by a type $A$ over $\Gamma$
  is the staged sum $\Gamma.A \defto \SSigma(\Gamma, A)$, with the \emph{projection} $\pproj(\gamma, a) \defto \gamma$ and the \emph{variable} $\qvar(\gamma, a) \defto a$.
  \item The \emph{substitution} along $\sigma : \Delta \to \Gamma$ of a type $A$ over $\Gamma$ is $A[\sigma] \defto A \circ \sigma$,
  and that of a term $t$ of type $A$ is $t[\sigma] \defto t \circ \sigma$.
\end{enumerate}

\subsection{Type formers}

\paragraph{Universes.}
Fix an inaccessible cardinal $\kappa$, and let $V_\kappa$ be the set of sets of rank below $\kappa$, a Grothendieck universe.
We assume $\Lo \in V_\kappa$.
We call a staged set $X$ \emph{$\kappa$-small} if $|X| \in V_\kappa$, and write $\sset_\kappa$ for the set of $\kappa$-small staged sets.
A family $Y$ over $X$ is \emph{$\kappa$-small} if every $Y(x)$ is $\kappa$-small, that is, if $Y$ is a function $|X| \to \sset_\kappa$.
Over a context $\Gamma$, the universe $\tytype$ is the constant family $\gamma \mapsto \cst \sset_\kappa$.
Since every element of $\cst \sset_\kappa$ is born at $0$, a term of type $\tytype$ is a function $X : |\Gamma| \to \sset_\kappa$, that is, a $\kappa$-small family over $\Gamma$, and we put
\begin{equation*}
\El(X) \defto X\;.
\end{equation*}
The decoding $\El$ commutes with substitution because both $\El(X)[\sigma]$ and $\El(X[\sigma])$ are $X \circ \sigma$.

\paragraph{Propositions.}
The universe $\typrop$ is the constant family $\gamma \mapsto \cst \Lo_\infty$. Thus a term of type $\typrop$ is a function $P : |\Gamma| \to \Lo_\infty$, and
\begin{equation*}
\El(P) \defto \gamma \mapsto \decode{P(\gamma)}\;.
\end{equation*}
Hence a term of type $\El(P)$ exists iff $P(\gamma) \le \nu(\gamma)$ for every $\gamma$. In particular, a proposition in the empty context has a closed term iff it has value $0$.
The inclusion of $\typrop$ into $\tytype$ sends $P$ to the same family,
\begin{equation*}
\iota(P) \defto \gamma \mapsto \decode{P(\gamma)}\;,
\qquad
\El(\iota(P)) = \El(P)\;.
\end{equation*}

\paragraph{$\Pi$-types.}
For a context $\Gamma$, a type $A$ over $\Gamma$ and a type $B$ over $\Gamma.A$, write $B_\gamma$ for the family $a \mapsto B(\gamma, a)$ over $A(\gamma)$, and define
\begin{equation*}
\Pi(A, B)(\gamma) \defto \SPi(A(\gamma), B_\gamma)\;.
\end{equation*}
Since the definition is pointwise in $\gamma$, the type former $\Pi$ commutes with substitution.

\begin{lem}\label{lem:pi}
For a type $A$ over $\Gamma$ and a type $B$ over $\Gamma.A$, there is a bijection, commuting with substitution,
\begin{equation*}
\lambda : \{\text{terms of type $B$ over $\Gamma.A$}\}  \rightleftarrows \{\text{terms of type $\Pi(A, B)$ over $\Gamma$}\} :  \mathsf{app}\;.
\end{equation*}
The two inverse laws are the $\beta$- and $\eta$-rules.
\end{lem}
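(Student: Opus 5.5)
We define the two maps by currying and uncurrying, and the main work is to check that birthdays are respected in both directions.

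\emph{Definition of $\lambda$.} Let $t$ be a term of type $B$ over $\Gamma.A$, i.e.\ a section of $B$ over $\SSigma(\Gamma, A)$. Then
\begin{equation*}
\nu(t(\gamma,a)) \le \max(\nu(\gamma), \nu(a))
\end{equation*}
for all $(\gamma, a)$. For each $\gamma \in |\Gamma|$, put $\lambda(t)(\gamma) \defto (a \mapsto t(\gamma, a))$.

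We must show that this function lies in $\SPi(A(\gamma), B_\gamma)$ with birthday at most $\nu(\gamma)$. The witness is $\ell = \nu(\gamma)$: the displayed inequality is exactly the defining condition
\begin{equation*}
\nu(t(\gamma,a)) \le \max(\nu_{A(\gamma)}(a), \ell).
\end{equation*}
Hence the least such $\ell$ is at most $\nu(\gamma)$. This makes $\lambda(t)$ a section of $\Pi(A, B)$.

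\emph{Definition of $\mathsf{app}$.} Let $f$ be a section of $\Pi(A,B)$, so $\nu(f(\gamma)) \le \nu(\gamma)$. Put $\mathsf{app}(f)(\gamma, a) \defto f(\gamma)(a)$.

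By the definition of $\SPi$, applied with the least witness $\ell = \nu(f(\gamma))$,
\begin{equation*}
\nu(f(\gamma)(a)) \le \max(\nu(a), \nu(f(\gamma))) \le \max(\nu(a), \nu(\gamma)) = \nu_{\Gamma.A}(\gamma, a).
\end{equation*}
So $\mathsf{app}(f)$ is a section of $B$ over $\Gamma.A$.

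\emph{Inverse laws.} Both composites are the identity by ordinary set-theoretic currying:
\begin{equation*}
\mathsf{app}(\lambda(t))(\gamma,a) = t(\gamma,a), \qquad \lambda(\mathsf{app}(f))(\gamma) = (a \mapsto f(\gamma)(a)) = f(\gamma).
\end{equation*}
The second uses function extensionality in the metatheory, together with the fact that an element of $\SPi$ is determined by its underlying function. Its birthday is computed from that function, so no staging data is lost. These two equations are the $\beta$- and $\eta$-rules.

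\emph{Naturality.} For $\sigma : \Delta \to \Gamma$, the substituted term $t[\sigma^+]$ is $(\delta,a) \mapsto t(\sigma(\delta), a)$, where $\sigma^+ : \Delta.A[\sigma] \to \Gamma.A$ is $(\delta,a)\mapsto(\sigma(\delta),a)$. Then
\begin{equation*}
\lambda(t[\sigma^+])(\delta) = a \mapsto t(\sigma(\delta),a) = \lambda(t)(\sigma(\delta)) = (\lambda(t)[\sigma])(\delta).
\end{equation*}
Here we use that $\Pi(A,B)[\sigma] = \Pi(A[\sigma], B[\sigma^+])$ pointwise, as noted before the lemma. The case of $\mathsf{app}$ follows, being the inverse of $\lambda$.

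\emph{Expected obstacle.} The only nontrivial point is the birthday bookkeeping in the first two steps. In particular, $\nu(\lambda(t)(\gamma))$ need not equal $\nu(\gamma)$; it is only bounded by it, and that bound is all a section requires. We also need that a set-theoretic function lies in $\SPi$ exactly when some witness $\ell$ exists, independently of its minimality. Everything else is routine.
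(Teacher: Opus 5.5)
Your proof is correct and follows the paper's argument. As in the paper, $\lambda$ and $\mathsf{app}$ are currying and uncurrying, the bound $\nu(t(\gamma,a)) \le \max(\nu(\gamma),\nu(a))$ says exactly that the curried function lies in $\SPi(A(\gamma), B_\gamma)$ with birthday at most $\nu(\gamma)$, and naturality holds pointwise in $\gamma$; you just spell out the birthday checks in both directions and derive the naturality of $\mathsf{app}$ from that of $\lambda$, which the paper leaves implicit.
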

\begin{proof}
A term of type $B$ over $\Gamma.A$ is a function $(\gamma, a) \mapsto y \in B(\gamma, a)$ with
\begin{equation*}
\nu(y) \le \max(\nu(\gamma), \nu(a))\;.
\end{equation*}
By currying, these are the functions assigning to each $\gamma$ an element of $\SPi(A(\gamma), B_\gamma)$ available at $\nu(\gamma)$, that is, the terms of type $\Pi(A, B)$.
The maps $\lambda$ and $\mathsf{app}$ are currying and uncurrying, which are mutually inverse.
Both maps commute with substitution because they are defined pointwise in $\gamma$.
\end{proof}

\paragraph{Impredicative quantification.}
For a type $A$ over $\Gamma$ and $P$ of type $\typrop$ over $\Gamma.A$, let
\begin{equation*}
\forall(A, P)(\gamma) \defto \nu\big(\SPi(A(\gamma),\, \decode{P(\gamma, -)})\big)\;,
\end{equation*}
a term of type $\typrop$ over $\Gamma$.

The staged set $\SPi(A(\gamma), \decode{P(\gamma, -)})$ has at most one element, born at $\forall(A, P)(\gamma)$ by definition.
Hence it is isomorphic to $\decode{\forall(A, P)(\gamma)}$, naturally in $\gamma$, and the types $\Pi(A, \El(P))$ and $\El(\forall(A, P))$ are isomorphic.
The introduction and elimination rules are $\lambda$ and $\mathsf{app}$ of \cref{lem:pi}, composed with this isomorphism.
Since every $\El(P)$ has at most one element, these rules satisfy the substitution laws and the $\beta$- and $\eta$-rules.

\begin{rem}
Inside the type theory, the rules of $\forall$ give an isomorphism
\begin{equation*}
\Pi(A, \El(P)) \cong \El(\forall(A, P))\;.
\end{equation*}
The model can also be modified to validate this isomorphism as a judgmental equality when $\Pi$ is defined by a case distinction, decided in the classical metatheory:
\begin{equation*}
\Pi(A, B)(\gamma) \defto
\begin{cases}
\decode{\nu\big(\SPi(A(\gamma), B_\gamma)\big)} & \text{if $B(\gamma, a)$ has at most one element for every $a \in |A(\gamma)|$}\;,\\
\SPi(A(\gamma), B_\gamma) & \text{otherwise}\;.
\end{cases}
\end{equation*}
The new $\Pi$ still commutes with substitution because the condition is pointwise in $\gamma$.
Since the new $\Pi$ replaces the staged product by an isomorphic staged set at each $\gamma$, \cref{lem:pi} carries over.
Nonetheless, we use the staged product because the description is simpler.

A similar technique can be found in~\cite[\S 4]{MW03}, where the distinction is made syntactically by the sort of $B$, and in an argument of Shulman~\cite[Appendix~A]{Coq23}, where it is made structural by tagging the types that come from propositions.
\end{rem}

\paragraph{$\Sigma$-types and finite types.}
We define $\Sigma(A, B)(\gamma)$ as the staged sum $\SSigma(A(\gamma), B_\gamma)$.
The finite types $\mathbf{0}$, $\mathbf{1}$, $\mathbf{2}$ are the constant families $\cst\emptyset$, $\cst\{\ast\}$ and $\cst\{0, 1\}$.

\paragraph{Identity propositions.}
For a type $A$ over $\Gamma$ and terms $a, b$ of type $A$, the identity proposition $\Id_A(a, b)$ is the term of type $\typrop$
\begin{equation*}
\Id_A(a, b)(\gamma) \defto
\begin{cases}
0 & \text{if } a(\gamma) = b(\gamma)\;,\\
\infty & \text{otherwise}\;.
\end{cases}
\end{equation*}
The term $\refl_a$ of type $\El(\Id_A(a, a))$ is $\gamma \mapsto \ast$.
We write $\Id(a, b)$ when $A$ is clear.

A term of type $\El(\Id_A(a, b))$ exists only if $a = b$.
Thus the model validates equality reflection.

\paragraph{W-types.}
For a type $A$ over $\Gamma$ and a type $B$ over $\Gamma.A$, let $\W(A, B)(\gamma)$ consist of the well-founded trees $\Wsup(a, f)$, with $a \in |A(\gamma)|$ and $f \in \SPi(B(\gamma, a), \W(A, B)(\gamma))$, born at
\begin{equation*}
\nu(\Wsup(a, f)) \defto \max(\nu(a), \nu(f))\;.
\end{equation*}
This is a definition by well-founded recursion on the trees of the set-theoretic W-type of the underlying sets.
Since $\Wsup(a, f)$ is born no later than the point $(\gamma, a, f)$, the constructor $\Wsup$ is a term of type $\W(A, B)$ over the context $\Gamma, a \of A, f \of \pit{b\of{}B\,a} \W(A, B)$.

For a type $C$ over $\Gamma.\W(A, B)$ and a term $c$ of type $C(\Wsup(a, f))$ over the context $\Gamma, a \of A,\allowbreak\ f \of \pit{b\of{}B\,a} \W(A, B),\allowbreak\ g \of \pit{b\of{}B\,a} C(f\,b)$, the recursor is defined by well-founded recursion,
\begin{equation*}
\rec(\gamma, \Wsup(a, f)) \defto c\big(\gamma, a, f, b \mapsto \rec(\gamma, f(b))\big)\;.
\end{equation*}
An induction on trees shows $\nu(\rec(\gamma, t)) \le \max(\nu(\gamma), \nu(t))$, and hence $\rec$ is a term of type $C$.
The recursor commutes with substitution because the definition is pointwise in $\gamma$.
The computation rule holds by definition.

\paragraph{Codes.}
The codes in $\tytype$ of $\Pi$-types, $\Sigma$-types, finite types and W-types are the constructions themselves.
They are $\kappa$-small because $V_\kappa$ is a Grothendieck universe.
The code of $\typrop$ in $\tytype$ is the constant function with value $\cst\Lo_\infty$, which is $\kappa$-small because $\Lo \in V_\kappa$.
The required equations hold by definition.

\subsection{Universes classifying types}\label{sec:why-precomposition}
Following Hofmann and Streicher~\cite{HS97}, a universe classifies the types if decoding is a bijection, natural in $\Gamma$, between the codes $\Gamma \vdash X : \tytype$ and the $\kappa$-small types over $\Gamma$.
The universe $\tytype$ of the model classifies the $\kappa$-small types in this sense, although it is constant, unlike the universe constructed in~\cite{HS97}, whose values vary with the stage.
Indeed, a $\kappa$-small type over $\Gamma$ and a code in $\tytype$ over $\Gamma$ are the same function $|\Gamma| \to \sset_\kappa$, with $\El$ the identity.
Hence $\tytype$ is a Russell universe~\cite[Definition~21]{Kov22}, presented à la Tarski, and $\typrop$ classifies the propositions through the bijection $p \mapsto \decode{p}$.

The standard interpretation in the locally cartesian closed category of staged sets, however, lacks this property.
There, a type over $\Gamma$ is a map into $\Gamma$ and substitution is pullback.
A type over $\Gamma$ is then a family whose fiber over $\gamma$ has no elements born before $\nu(\gamma)$.
Pulling back along a morphism $\sigma : \Delta \to \Gamma$ raises the birthdays of the elements of the fiber over a point $\delta$ to at least $\nu(\delta)$.
The universe $\tytype$ can still be constant, to include the constant universe $\typrop$ that the refutations of excluded middle and propositional extensionality require.
In this case, $\El$ raises the birthdays of a code to the stage of the point.
It then classifies the types only weakly.
That is, every type is the decoding of a code, but no choice of codes is natural in $\Gamma$.

\section{The logic of the model}\label{sec:logic}
For a proposition $P$ and a code $X$ in $\tytype$, we write $P$ for $\El P$ and $X$ for $\El X$, and $P \lto Q$ for $\all{x\of{}P} Q$.
Impredicativity encodes the logical connectives as follows, see for example~\cite[\S 3.1]{AFS18}.
\begin{align*}
\top &\defto \all{R} R \lto R\;,\\
\bot &\defto \all{R} R\;,\\
P \wedge Q &\defto \all{R} (P \lto Q \lto R) \lto R\;,\\
P \vee Q &\defto \all{R} (P \lto R) \lto (Q \lto R) \lto R\;,\\
\ex{x\of{}A} P\,x &\defto \all{R} \big(\all{x\of{}A} (P\,x \lto R)\big) \lto R\;,
\end{align*}
together with $\neg P \defto P \lto \bot$ and $P \liff Q \defto (P \lto Q) \wedge (Q \lto P)$.
Each connective is the impredicative encoding of the corresponding inductive proposition.
For example, the propositional truncation $\supp{X}$ of \cref{sec:intro} encodes the inductive proposition $\mathsf{Nonempty}\,X$ of Lean, with one constructor $X \to \mathsf{Nonempty}\,X$.

The values of these encodings can be computed from the definition of $\forall$.

\begin{lem}\label{lem:values}
For truth values $p, q$, a staged set $X$ and $\varphi : |X| \to \Lo_\infty$,
\begin{align*}
p \lto q &=
\begin{cases}
0 & \text{if } q \le p\;,\\
q & \text{otherwise}\;,
\end{cases} &
\all{x\of{}X} \varphi\,x &= \sup_{x \in X}\,\big(\nu(x) \lto \varphi(x)\big)\;.
\end{align*}
In particular, quantification over $\typrop$, whose elements are all born at $0$, is a supremum: $\all{r\of{}\typrop} F(r) = \sup_{r \in \Lo_\infty} F(r)$ for every $F : \Lo_\infty \to \Lo_\infty$.
Hence the encodings take the values
\begin{align*}
p \wedge q &= \max(p, q)\;, &
p \vee q &= \min(p, q)\;, &
\bot &= \infty\;,\\
\neg p &=
\begin{cases}
0 & \text{if } p = \infty\;,\\
\infty & \text{otherwise}\;,
\end{cases} &
\supp{X} &= \nu(X)\;, &
\ex{x\of{}X} \varphi\,x &= \inf_{x \in X}\,\big(\max(\nu(x), \varphi(x))\big)\;,
\end{align*}
with $\sup \emptyset = 0$ and $\inf \emptyset = \infty$.
\end{lem}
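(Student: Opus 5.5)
The plan is to first compute the general quantifier $\all{x\of{}X}\varphi\,x$ directly from the definition of $\forall$, and then read off everything else as special cases. By definition, $\all{x\of{}X}\varphi\,x$ is $\nu(\SPi(X,\decode{\varphi(-)}))$. This staged product has at most one candidate function, namely $x\mapsto\ast$, and that function exists as a set-theoretic dependent function iff $\varphi(x)\neq\infty$ for all $x$. It belongs to $\SPi$ with birthday the least $\ell$ such that $\varphi(x)\le\max(\nu(x),\ell)$ for all $x$.

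For a single $x$, the least $\ell_x$ with $\varphi(x)\le\max(\nu(x),\ell_x)$ is $0$ if $\varphi(x)\le\nu(x)$ and $\varphi(x)$ otherwise. We also need the convention that if $\varphi(x)=\infty>\nu(x)$, then no $\ell\in\Lo$ works, matching the value $\infty$. So $\ell_x$ is exactly $\nu(x)\lto\varphi(x)$ as given by the first formula. The least $\ell$ working for all $x$ is then $\sup_x \ell_x$, because the condition on $\ell$ is upward closed for each $x$. This supremum exists in $\Lo_\infty$ since $\Lo$ is well-ordered, with the supremum equal to $\infty$ exactly when no $\ell\in\Lo$ works (in particular when some $\ell_x=\infty$). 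Taking $\sup\emptyset=0$ handles an empty $X$.

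The implication formula is the case where $X=\decode{p}$: if $p=\infty$ the supremum is empty, giving $0$ (consistent with $q\le\infty$); otherwise it is the single term $\ell_\ast$ with $\nu(\ast)=p$. The claim about $\typrop$ follows because every $r\in\cst\Lo_\infty$ has $\nu(r)=0$, and $0\lto s=s$.

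The remaining values are then short computations with suprema over $r\in\Lo_\infty$:
\begin{itemize}
\item $\bot=\sup_r r=\infty$.
\item For $\neg p=p\lto\infty$, the first formula gives $0$ if $p=\infty$ and $\infty$ otherwise.
\item For $\wedge$, evaluate $(p\lto q\lto r)\lto r$ case by case, e.g.\ at $r=\max(p,q)$, to show that the supremum is $\max(p,q)$: the term is $0$ when $r\ge\max(p,q)$, and the value $r$ for smaller $r$ is bounded by $\max(p,q)$, with the bound attained at the threshold. The main care point is that nested implications are iterated suprema.
\item For $\vee$, similarly: $(p\lto r)\lto((q\lto r)\lto r)$ vanishes when $r\ge\min(p,q)$ and otherwise equals $r$, so the supremum over $r<\min(p,q)$ is $\min(p,q)$ when $\Lo$ lacks a maximum below it. This is the step to watch. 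If $\min(p,q)$ is a successor, the supremum of the smaller $r$ is its predecessor, not $\min(p,q)$ itself. I would recheck this by evaluating at $r$ just below and noting the case $r<\min(p,q)$ yields the value $\min(p,q)$ rather than $r$, since $(p\lto r)=p$ and the outer implication returns the later value.
\item For $\exists$, the inner $\all{x}(\varphi x\lto R)$ evaluates to $0$ iff $r\ge\varphi(x)$ and $r\ge\nu(x)$ for every $x$ with the relevant combination, i.e.\ iff $r\ge\inf_x\max(\nu(x),\varphi(x))$, and to something $\ge r$ otherwise. So the outer supremum gives $\inf_x\max(\nu(x),\varphi(x))$.
\item $\supp{X}$ is the special case $\varphi\equiv0$, giving $\inf\nu(x)=\nu(X)$.
\end{itemize}

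I expect the main obstacle to be the bookkeeping in these nested suprema, where careful case splits on $r$ relative to the thresholds are needed.
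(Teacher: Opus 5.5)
Your strategy is the paper's. You identify $\all{x\of{}X}\varphi\,x$ with the least $\ell$ satisfying $\varphi(x)\le\max(\nu(x),\ell)$ for all $x$, rewrite it as $\sup_x(\nu(x)\lto\varphi(x))$, specialise to $\decode{p}$ and to $\typrop$, and evaluate each encoding as a supremum over $r\in\Lo_\infty$. That part is correct, and so are your values for $\bot$ and $\neg p$. The computations of $\wedge$, $\vee$ and $\exists$, however, put the threshold in the wrong place, so as written they do not establish the stated values. In each case the summand $F(r)$ equals $r$ for $r\le t$ and $0$ for $r>t$. Here $t$ is $\max(p,q)$, $\min(p,q)$ and $m\defto\inf_x\max(\nu(x),\varphi(x))$ respectively, so the supremum is attained at $r=t$. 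You claim instead that the summand vanishes for $r\ge t$. For $\wedge$ this contradicts your own remark that the bound is attained at the threshold. For $\vee$ it creates the successor problem you noticed, and your repair is false. For $r<\min(p,q)$ we have $r\le p$, so $p\lto r=0$, not $p$, and the summand is $r$, not $\min(p,q)$. The correct computation runs as follows. First, $(q\lto r)\lto r$ is $r$ if $r\le q$ and $0$ otherwise. For $r\le p$ the outer implication is $0\lto(\cdot)$, which returns this value unchanged. For $r>p$ it is $r\lto(\cdot)=0$, since the argument is at most $r$. So the summand is $r$ for $r\le\min(p,q)$ and $0$ above.

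For $\exists$ your inequality is reversed. The inner quantifier is $\sup_x\big(\nu(x)\lto(\varphi(x)\lto r)\big)$, and each summand is $0$ if $r\le\max(\nu(x),\varphi(x))$ and $r$ otherwise. So the inner value is $0$ for $r\le m$ and $r$ for $r>m$, not $0$ for $r\ge m$. With your version the outer $(\cdot)\lto r$ would be $r$ for $r\ge m$ and $0$ below, which gives the supremum $\infty$ rather than $m$. With the correct version it is $r$ for $r\le m$ and $0$ above, which gives $m$. You obtain $\supp{X}$ as the case $\varphi\equiv 0$ of $\exists$, which is legitimate since $0\lto r=r$, but it therefore inherits the error. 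The paper computes $\supp{X}$ directly: the summand $(\all{x\of{}X}r)\lto r$ equals $r$ for $r\le\nu(X)$ and $0$ otherwise. Once the boundary case $r=t$ is handled correctly, all the nested suprema reduce to this one pattern and no delicate bookkeeping remains.
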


\begin{proof}
By definition, $\all{x\of{}X} \varphi\,x$ is the birthday of $\SPi(X, \decode{\varphi(-)})$: the least $\ell \in \Lo$ with $\varphi(x) \le \max(\ell, \nu(x))$ for all $x \in X$, and $\infty$ if there is none.
For $p \lto q = \all{x\of{}\decode{p}} q$, this is the least $\ell$ with $q \le \max(\ell, p)$, which is $0$ if $q \le p$ and $q$ otherwise.
Since $\varphi(x) \le \max(\ell, \nu(x))$ iff $\nu(x) \lto \varphi(x) \le \ell$, the general case is the stated supremum.
The values of the encodings follow by computing the suprema over $r \in \Lo_\infty$.
For example, for $\supp{X}$ with $m = \nu(X)$, the summand $\big(\all{x\of{}X} r\big) \lto r$ is $r$ if $r \le m$ and $0$ otherwise, and the supremum is $m$.
\end{proof}

\subsection{Propositional extensionality}\label{sec:propext}
For propositions $p$ and $q$, the identity proposition $\Id(p, q)$ is decided at stage $0$, while the equivalence $p \liff q$ may hold only from a later stage on.
In the context $p \of \typrop,\ q \of \typrop$, by \cref{lem:values},
\begin{equation*}
(p \lto q) \wedge (q \lto p)
=
\begin{cases}
0 & \text{if } p = q\;,\\
\max(p, q) & \text{otherwise}\;,
\end{cases}
\end{equation*}
while $\Id(p, q)$ is $0$ if $p$ and $q$ coincide and $\infty$ otherwise.
Hence,
\begin{equation*}
(p \lto q) \wedge (q \lto p) \lto \Id(p, q)
=
\begin{cases}
\infty & \text{if } p \neq q \text{ and } p, q \in \Lo\;,\\
0 & \text{otherwise}\;.
\end{cases}
\end{equation*}
When $\Lo$ has at least two elements, the supremum over $p, q$ is $\infty$, and hence
\begin{equation*}
\neg\,\all{p,q\of{}\typrop} (p \lto q) \wedge (q \lto p) \lto \Id(p, q) = 0\;.
\end{equation*}

\subsection{Excluded middle}\label{sec:lem}
By \cref{lem:values},
\begin{equation*}
\all{p\of{}\typrop} p \vee \neg p = \sup_{p \in \Lo_\infty} (p \vee \neg p) = \sup \Lo\;.
\end{equation*}
When $\Lo$ has no greatest element, in particular when $\Lo = \mathbb{N}$, this is $\infty$, and
\begin{equation*}
\neg\,\all{p\of{}\typrop} p \vee \neg p = 0\;.
\end{equation*}
When $\Lo$ has a greatest element, excluded middle takes the value $\max \Lo$, which is $0$ if $\Lo$ has one element.
Excluded middle is thus independent of the type theory of \cref{sec:intro} with global choice.

\subsection{Global choice}\label{sec:choice}
For each nonempty $\kappa$-small staged set $X$, choose in the metatheory an element $c(X) \in X$ of least birthday. By \cref{lem:values},
\begin{equation*}
\nu(c(X)) = \supp{X}\;.
\end{equation*}

By \cref{lem:pi}, applied twice, a closed term of type
\begin{equation*}
\pit{X\of{}\tytype} \supp{X} \to X
\end{equation*}
is the same as a term of type $X$ in the context $X \of \tytype,\ h \of \supp{X}$.
A point of this context is a pair $(X, \ast)$ with $X$ nonempty, born at $\max(0, \supp{X}) = \supp{X}$, since every code is born at $0$.
Such a term is thus a function $t$ with $t(X, \ast) \in X$ and $\nu(t(X, \ast)) \le \supp{X}$, and
\begin{equation*}
t(X, \ast) \defto c(X)
\end{equation*}
defines one.
The corresponding closed term interprets $\choice$ and, being closed, commutes with substitution.

\subsection{Proof of \texorpdfstring{\cref{thm:main}}{Theorem 1.1}}
Let $\Lo = \mathbb{N}$.
By \cref{lem:values}, $\bot = \infty$, and hence $\bot$ has no closed term.
The model is thus consistent.
It interprets the type theory of \cref{sec:intro} and global choice by \cref{sec:model,sec:choice}.
The negations of propositional extensionality and of excluded middle have closed terms by \cref{sec:propext,sec:lem}.

\section{Further features and Lean}\label{sec:more}
The model of \cref{sec:model} with $\Lo = \mathbb{N}$ validates the following rules beyond those of the type theory of \cref{sec:intro}:
\begin{enumerate}
\item proof irrelevance;
\item a cumulative hierarchy of universes $\tytype_0 \subseteq \tytype_1 \subseteq \cdots$ with codes for the type formers, the inclusion of $\typrop$ into $\tytype_0$, and a global choice operator at every level;
\item inductive families of types and of propositions;
\item quotients with soundness.
\end{enumerate}
Proof irrelevance holds because every $\El(P)$ has at most one element.

\subsection{Universes}
Fix inaccessible cardinals $\kappa_0 < \kappa_1 < \cdots$ with $\Lo \in V_{\kappa_0}$.
The universe $\tytype_i$ is the constant family $\gamma \mapsto \cst\sset_{\kappa_i}$, with $\El$ the identity, as $\tytype$ in \cref{sec:model} with $\kappa_i$ in place of $\kappa$.
This is the standard construction of a cumulative hierarchy of universes in a set-theoretic model.
The code $u_i$ of $\tytype_i$ in $\tytype_{i+1}$ is the constant function with value $\cst\sset_{\kappa_i}$, and the code of $\typrop$ in $\tytype_0$ is the constant function with value $\cst\Lo_\infty$.
For $i \le j$, the lift $\lift_{i,j}$ from $\tytype_i$ to $\tytype_j$ is given by the inclusion $\sset_{\kappa_i} \subseteq \sset_{\kappa_j}$.

Each universe $\tytype_i$ classifies the $\kappa_i$-small types in the sense of \cref{sec:why-precomposition}, with $\El$ the identity.
The construction of \cref{sec:choice} applies to each $\tytype_i$ with $\kappa_i$ in place of $\kappa$, and gives a global choice operator $\choice_i$ at every level.

\subsection{Inductive families}\label{sec:inductive}
Inductive types are encoded with W-types~\cite{Dyb97}, and inductive families reduce to W-types using function extensionality and uniqueness of identity proofs~\cite[\S 7.1]{AGHMM15}.
The computation rules hold by equality reflection.

Inductive families of propositions are encoded impredicatively, as in~\cite{PPM89,AFS18}, and proof irrelevance gives dependent elimination into $\typrop$ from the non-dependent one.

An inductive proposition with at most one constructor, whose arguments are propositions or indices of the result, also eliminates into types.
This is the subsingleton elimination of Lean~\cite[\S 2.6.2]{Carn19}, which extends singleton elimination~\cite{GCST19}.
Global choice derives it in the type theory.
For such a proposition $P$, let $T$ be the inductive family of types obtained by replacing $P$ with $T$ in the type of the constructor.
Then $\supp{T}$ is equivalent to $P$, using global choice.
The eliminator is then given by that of $T$ and global choice.
The type $T$ is a subsingleton, by induction on $T$ using proof irrelevance and function extensionality.
Hence the computation rule holds judgmentally, by equality reflection.

\subsection{Quotients}
Quotients are as in Lean. For $R : A \to A \to \typrop$ there is a type $\Quot(A, R)$ with $\mk : A \to \Quot(A, R)$, lifting, induction and soundness,
\begin{equation*}
\all{a,b\of{}A} R\,a\,b \lto \Id(\mk\,a, \mk\,b)\;.
\end{equation*}
Lean derives exactness,
\begin{equation*}
\all{a,b\of{}A} \Id(\mk\,a, \mk\,b) \lto R\,a\,b
\end{equation*}
for an equivalence relation $R$, from propositional extensionality.
The model does not validate exactness, which with global choice would give excluded middle~\cite{Mai99}.

For a type $A$ over $\Gamma$ and $R$ of type $\typrop$ over $\Gamma.A.A$, let $\approx_\gamma$ be the equivalence relation on $|A(\gamma)|$ generated by $R(\gamma, a, b) \neq \infty$, and
\begin{equation*}
\Quot(A, R)(\gamma) \defto A(\gamma)/{\approx_\gamma}\;,
\qquad
\nu([a]) \defto \min\{\nu(a') \mid a' \approx_\gamma a\}\;.
\end{equation*}
Since a class is born no later than any member, $\mk\,a \defto \gamma \mapsto [a(\gamma)]$ is a term.

For a type $B$ over $\Gamma$, a term $f$ of type $B$ over $\Gamma.A$ and a term $h$ of type $\all{a,b\of{}A} R\,a\,b \lto \Id(f\,a, f\,b)$, lifting is given by $(\qlift\,f\,h)(\gamma, [a]) \defto f(\gamma, a)$.
It is well defined since identity propositions take only the values $0$ and $\infty$.
It is a term since every class has a member born with it, and induction holds for the same reason.

Soundness holds by the definition of $\approx_\gamma$.

\subsection{Lean}
Carneiro's thesis~\cite{Carn19} specifies the type theory of Lean.
The type theory of \cref{sec:intro} with the rules of \cref{sec:more} covers the type theory of Carneiro's thesis except propositional extensionality.
The universes, proof irrelevance, quotients and global choice of Lean are those of \cref{sec:more}, with $\mathsf{Sort}\,0$ as $\typrop$.
The inductive families of Lean, with elimination of subsingleton propositions into types, are given by \cref{sec:inductive}.
The current version of Lean, 4.33.1, also provides mutual and nested inductive types and $\eta$ for structures.
The rule $\eta$ for structures holds in Carneiro's type theory up to propositional equality~\cite[\S 6.1]{Carn25}, and hence judgmentally by equality reflection.
We expect most mutual and nested inductive types to be derivable as well.
Lean reduces nested inductive types to mutual ones~\cite[\S 6]{Carn25}, and finitary mutual inductive families reduce to indexed W-types~\cite[Theorem~9]{KvR20}.

\section{Related work}\label{sec:related}
Goodman and Myhill~\cite{GM78} derive excluded middle from choice in intuitionistic set theory, also using extensionality.
Later work isolates the extensionality principle that the derivation needs.
Bell~\cite{Bel08} works in a weak intuitionistic set theory without extensionality, where excluded middle follows from choice with extensionality for subsets of $2$ or with quotient sets.
In first-order logic, Bell~\cite[\S 3]{Bel93} shows that Hilbert's $\varepsilon$-operator gives excluded middle when it takes formulas with the same extension to the same witness.
In type theory, choice gives excluded middle when extensionality is assumed: of the choice function~\cite{ML06, Car04}, of propositions~\cite{RocqDiaconescu, LeanClassical}, \cite[Theorem~10.1.14]{HoTT13}, or of quotients, as exactness~\cite{Mai99}.

Earlier counter-models with $\varepsilon$-operators interpret first-order or higher-order logic, but not dependent types.
In first-order logic, Bell~\cite{Bel93} constructs a counter-model to excluded middle for the plain $\varepsilon$-operator, with inversely well-ordered truth values~\cite[\S 7]{Bel93}, of which the truth values of the model of \cref{sec:model} are an instance.
In higher-order logic, Maietti, Pasquali and Rosolini~\cite[Example~5.14, Theorem~6.6]{MPR17} use the same truth values: the localic tripos of a well-order with a greatest element has $\varepsilon$-operators without validating excluded middle.
The topos presented by this tripos interprets dependent types, but it has no $\varepsilon$-operators.
The model of \cref{sec:model} interprets dependent type theory with global choice, the guarded form of the $\varepsilon$-operator.

In type theory, the only model known to us with a choice axiom in which excluded middle fails is the intuitionistic set-theoretic model of Sato and Garrigue~\cite{SG23} for $\mathrm{CC}^\omega$, which also refutes propositional extensionality~\cite[\S 4.4]{SG23}.
The type theory $\mathrm{CC}^\omega$ lacks general inductive types, $\Sigma$-types and the inclusion of $\typrop$ into the universes.
In contrast, the core type theories of Lean and Rocq have inductive types and internalize propositions as types, Rocq by the inclusion and Lean through $\mathsf{PLift}$.
Sato and Garrigue interpret a proposition as an open set of an Alexandroff space and a type as a plain set.
Hence a proposition included in a universe forgets this structure, and the interpretation breaks~\cite[\S 6]{SG23}.
In the model of \cref{sec:model}, types and propositions are both staged sets, and a proposition is a staged set with at most one element.
The inclusion of $\typrop$ into the universes is then the identity on staged sets.

\section*{Acknowledgment}
\noindent The author thanks Andrew Swan for discussions and for guidance on how to formally verify a model of type theory, and Andrej Bauer for suggestions on the presentation of this paper.

\noindent
This work was supported by the AI For Math Fund, a program of Renaissance Philanthropy.
It was also supported by the Slovenian Research and Innovation Agency award P1-0294.

\section*{Software use}
Claude Opus 5.5 (Anthropic) was used in developing the model and in formalizing it in Rocq (\cref{app:proofs}).

\bibliographystyle{alphaurl}
\bibliography{references}

\appendix
\crefalias{section}{appendix}  

\section{Proof obligations}\label{app:proofs}
\sloppy  
We checked in Rocq~9.1.1~\cite{Rocq}, with the standard library, that the CwF of \cref{sec:model} satisfies the rules of the type theory of \cref{sec:intro}, together with proof irrelevance and quotients from \cref{sec:more}.
The file {\ttfamily Model.v}, available at \url{https://github.com/park-sewon/Diaconescu/tree/arxiv}, defines the structure \githublink{154}{CwF} of a CwF, and the structure \githublink{1085}{TModel} of the rules on a CwF.
It then defines the CwF \githublink{1536}{stagedCwF} of staged sets and, for any well-ordered $\Lo$ without a greatest element, an instance \githublink{2376}{stagedModel} of \githublink{1085}{TModel} on it, which is instantiated at $\Lo = \mathbb{N}$ as \githublink{2402}{natModel}.
It uses four axioms, all of the classical metatheory: excluded middle, propositional extensionality, dependent function extensionality, and global choice.

Rocq's universes play the role of the Grothendieck universe $V_\kappa$.

Below, $\Ty(\Gamma)$ is the set of types over $\Gamma$, and $\Tm(\Gamma, A)$ is the set of terms of type $A$ over $\Gamma$.
The projection and the variable of the context extension (\cref{sec:model}) are $\pproj : \Gamma.A \to \Gamma$ and $\qvar \in \Tm(\Gamma.A, A[\pproj])$.
For a context morphism $\sigma : \Delta \to \Gamma$, we write $A[\sigma]$ and $t[\sigma]$ for substitution along $\sigma$, and $\sigma^+ \defto \langle \sigma \circ \pproj, \qvar \rangle : \Delta.A[\sigma] \to \Gamma.A$ for the extension of $\sigma$.
For a type $B$ over $\Gamma.A$ and $t \in \Tm(\Gamma, A)$, we write $B[t]$ for $B[\langle \mathrm{id}, t\rangle]$.

Throughout, $A \in \Ty(\Gamma)$ and $B \in \Ty(\Gamma.A)$.
Each rule is an operation on types, an operation on terms, or an equation.
Within an item, a variable keeps the range given in the sentence that introduces it.
An operation on types, such as ``$\Pi(A, B) \in \Ty(\Gamma)$'', is where we define the type $\Pi(A, B)$ over $\Gamma$ for all $A$ and $B$.
Similarly, an operation on terms, such as ``$\lambda(t) \in \Tm(\Gamma, \Pi(A, B))$ for $t \in \Tm(\Gamma.A, B)$'', is where we define the term $\lambda(t)$.
An equation, such as ``$\mathsf{app}(\lambda(t)) = t$'', is what we verify for these definitions.
We further verify that every operation is \emph{natural}: it commutes with substitution along every $\sigma : \Delta \to \Gamma$.
For types, this is an equation such as $\Pi(A, B)[\sigma] = \Pi(A[\sigma], B[\sigma^+])$.
For terms, the two sides have types that are equal only by such an equation, and naturality is stated with transport along it.
For example, $\lambda(t)[\sigma]$ has type $\Pi(A, B)[\sigma]$, and transported to $\Pi(A[\sigma], B[\sigma^+])$ it equals $\lambda(t[\sigma^+])$.

The substitution laws of the introduction and elimination rules of $\forall$ and of $\refl$ are derived in the specification from proof irrelevance.
The $\beta$- and $\eta$-rules of $\forall$ are instances of proof irrelevance.
The eliminator $J$ and function extensionality follow from equality reflection.
We did not formalize these derivations, the universe hierarchy, or the inductive families of \cref{sec:more}.

\subsection*{Category with families}
\begin{enumerate}
\item Substitution: Composition of context morphisms is associative, with identities $\mathrm{id}$. For $t \in \Tm(\Gamma, A)$, $A[\mathrm{id}] = A$ and $t[\mathrm{id}] = t$. For $\tau : \Theta \to \Delta$, $A[\sigma][\tau] = A[\sigma \circ \tau]$ and $t[\sigma][\tau] = t[\sigma \circ \tau]$.\par\hfill{\footnotesize Specification: \githublink{154}{CwF} (\githublink{178}{ty\_sub\_id}, \githublink{180}{ty\_sub\_comp}, \githublink{183}{tm\_sub\_id}, \githublink{185}{tm\_sub\_comp}, \githublink{161}{id\_comp}, \githublink{163}{comp\_id}, \githublink{165}{comp\_assoc}); model: \githublink{1536}{stagedCwF}}
\item Comprehension: For $t \in \Tm(\Delta, A[\sigma])$, $\langle \sigma, t\rangle : \Delta \to \Gamma.A$. $\pproj \circ \langle \sigma, t \rangle = \sigma$. $\qvar[\langle \sigma, t \rangle] = t$. For $\tau : \Delta \to \Gamma.A$, $\langle \pproj \circ \tau, \qvar[\tau] \rangle = \tau$. The empty context is terminal.\par\hfill{\footnotesize Specification: \githublink{154}{CwF} (\githublink{194}{pair\_proj}, \githublink{197}{pair\_var}, \githublink{201}{pair\_eta}, \githublink{170}{toEmpty\_unique}); model: \githublink{1525}{sPair\_proj}, \githublink{1528}{sPair\_var}, \githublink{1531}{sPair\_eta}, \githublink{1518}{sToEmpty\_unique}}
\end{enumerate}

\subsection*{Universe}
\begin{enumerate}\setcounter{enumi}{2}
\item $\tytype \in \Ty(\Gamma)$. For $X \in \Tm(\Gamma, \tytype)$, $\El(X) \in \Ty(\Gamma)$.\par\hfill{\footnotesize Specification: \githublink{993}{UnivBase} (\githublink{995}{U}, \githublink{997}{U\_sub}, \githublink{998}{El}, \githublink{1000}{El\_sub}), \githublink{1085}{TModel} (\githublink{1093}{tU}); model: \githublink{2136}{stUnivB}}
\end{enumerate}

\subsection*{Propositions}
\begin{enumerate}\setcounter{enumi}{3}
\item Propositions: $\typrop \in \Ty(\Gamma)$. For $P \in \Tm(\Gamma, \typrop)$, $\El(P) \in \Ty(\Gamma)$. $\overline{\typrop} \in \Tm(\Gamma, \tytype)$. $\El(\overline{\typrop}) = \typrop$. For $P \in \Tm(\Gamma, \typrop)$, $\iota(P) \in \Tm(\Gamma, \tytype)$. $\El(\iota(P)) = \El(P)$.\par\hfill{\footnotesize Specification: \githublink{722}{PropF} (\githublink{724}{prop}, \githublink{726}{prop\_sub}, \githublink{728}{prfT}, \githublink{730}{prf\_sub}), \githublink{1028}{Codes} (\githublink{1074}{propCode}, \githublink{1075}{El\_propCode}, \githublink{1076}{propCode\_sub}, \githublink{1069}{inclCode}, \githublink{1070}{El\_inclCode}, \githublink{1071}{inclCode\_sub}); model: \githublink{1718}{stProp}, \githublink{2140}{stCodes}}
\item\label{ob:prf-irrel} Proof irrelevance: For $P \in \Tm(\Gamma, \typrop)$ and $h, h' \in \Tm(\Gamma, \El(P))$, $h = h'$.\par\hfill{\footnotesize Specification: \githublink{722}{PropF} (\githublink{733}{prf\_irrel})}
\item Impredicativity: For $P \in \Tm(\Gamma.A, \typrop)$, $\forall(A, P) \in \Tm(\Gamma, \typrop)$. For $b \in \Tm(\Gamma.A, \El(P))$, $\mathsf{intro}(b) \in \Tm(\Gamma, \El(\forall(A, P)))$. For $s \in \Tm(\Gamma, \El(\forall(A, P)))$, $\mathsf{elim}(s) \in \Tm(\Gamma.A, \El(P))$.\par\hfill{\footnotesize Specification: \githublink{722}{PropF} (\githublink{735}{all}, \githublink{737}{all\_sub}, \githublink{741}{allIntro}, \githublink{744}{allElim}), \githublink{753}{allIntro\_sub}, \githublink{758}{allElim\_sub}; model: \githublink{1695}{sAll}, \githublink{1700}{sAllIntro}, \githublink{1710}{sAllElim}}
\end{enumerate}

\subsection*{$\Pi$ and $\Sigma$}
\begin{enumerate}\setcounter{enumi}{6}
\item $\Pi$-types: $\Pi(A, B) \in \Ty(\Gamma)$. For $t \in \Tm(\Gamma.A, B)$, $\lambda(t) \in \Tm(\Gamma, \Pi(A, B))$. For $s \in \Tm(\Gamma, \Pi(A, B))$, $\mathsf{app}(s) \in \Tm(\Gamma.A, B)$. $\mathsf{app}(\lambda(t)) = t$. $\lambda(\mathsf{app}(s)) = s$.\par\hfill{\footnotesize Specification: \githublink{313}{PiF} (\githublink{317}{pi\_sub}, \githublink{320}{lam}, \githublink{321}{app}, \githublink{323}{pi\_beta}, \githublink{325}{pi\_eta}, \githublink{327}{lam\_sub}); model: \githublink{1586}{stPi}, \githublink{1317}{PiS}, \githublink{1575}{sLam}, \githublink{1580}{sApp}, \githublink{1326}{piExt}}
\item $\Sigma$-types: $\Sigma(A, B) \in \Ty(\Gamma)$. For $t \in \Tm(\Gamma, A)$ and $u \in \Tm(\Gamma, B[t])$, $(t, u) \in \Tm(\Gamma, \Sigma(A, B))$. For $s \in \Tm(\Gamma, \Sigma(A, B))$, $\mathsf{fst}(s) \in \Tm(\Gamma, A)$ and $\mathsf{snd}(s) \in \Tm(\Gamma, B[\mathsf{fst}(s)])$. $\mathsf{fst}(t, u) = t$. $\mathsf{snd}(t, u) = u$. $(\mathsf{fst}(s), \mathsf{snd}(s)) = s$.\par\hfill{\footnotesize Specification: \githublink{331}{SigmaF} (\githublink{335}{sigma\_sub}, \githublink{338}{spair}, \githublink{340}{sfst}, \githublink{341}{ssnd}, \githublink{344}{sigma\_fst}, \githublink{347}{sigma\_snd}, \githublink{350}{sigma\_eta}, \githublink{353}{spair\_sub}, \githublink{359}{sfst\_sub}, \githublink{362}{ssnd\_sub}); model: \githublink{1617}{stSigma}}
\end{enumerate}

\subsection*{Finite types}
\begin{enumerate}\setcounter{enumi}{8}
\item Empty type: $\mathbf{0} \in \Ty(\Gamma)$. For $C \in \Ty(\Gamma.\mathbf{0})$ and $e \in \Tm(\Gamma, \mathbf{0})$, $\mathsf{elim}_C(e) \in \Tm(\Gamma, C[e])$.\par\hfill{\footnotesize Specification: \githublink{370}{FinF} (\githublink{372}{zero}, \githublink{374}{zero\_sub}, \githublink{376}{zero\_elim}, \githublink{378}{zero\_elim\_sub}); model: \githublink{1660}{stFin}}
\item Unit type: $\mathbf{1} \in \Ty(\Gamma)$. $\ast \in \Tm(\Gamma, \mathbf{1})$. For $e \in \Tm(\Gamma, \mathbf{1})$, $e = \ast$. The eliminator of $\mathbf{1}$ and the computation rule of the eliminator follow from this equation.\par\hfill{\footnotesize Specification: \githublink{370}{FinF} (\githublink{388}{one}, \githublink{390}{one\_sub}, \githublink{391}{star}, \githublink{393}{star\_sub}, \githublink{395}{one\_eta}), \githublink{442}{one\_elim}, \githublink{446}{one\_elim\_star}; model: \githublink{1660}{stFin}}
\item Booleans: $\mathbf{2} \in \Ty(\Gamma)$. $\mathsf{tt}, \mathsf{ff} \in \Tm(\Gamma, \mathbf{2})$. For $C \in \Ty(\Gamma.\mathbf{2})$, $c \in \Tm(\Gamma, C[\mathsf{tt}])$, $d \in \Tm(\Gamma, C[\mathsf{ff}])$ and $e \in \Tm(\Gamma, \mathbf{2})$, $\mathsf{if}_C(c, d, e) \in \Tm(\Gamma, C[e])$. $\mathsf{if}_C(c, d, \mathsf{tt}) = c$. $\mathsf{if}_C(c, d, \mathsf{ff}) = d$.\par\hfill{\footnotesize Specification: \githublink{370}{FinF} (\githublink{397}{two}, \githublink{399}{two\_sub}, \githublink{400}{ttT}, \githublink{401}{ffT}, \githublink{403}{tt\_sub}, \githublink{405}{ff\_sub}, \githublink{407}{two\_elim}, \githublink{411}{two\_elim\_tt}, \githublink{414}{two\_elim\_ff}, \githublink{417}{two\_elim\_sub}); model: \githublink{1660}{stFin}, \githublink{1645}{sTwoElim}}
\end{enumerate}

\subsection*{Identity propositions}
\begin{enumerate}\setcounter{enumi}{11}
\item Identity propositions: For $a, b \in \Tm(\Gamma, A)$, $\Id_A(a, b) \in \Tm(\Gamma, \typrop)$. For $a \in \Tm(\Gamma, A)$, $\refl_a \in \Tm(\Gamma, \El(\Id_A(a, a)))$.\par\hfill{\footnotesize Specification: \githublink{798}{IdF} (\githublink{800}{idP}, \githublink{802}{idP\_sub}, \githublink{805}{refl}), \githublink{811}{refl\_sub}; model: \githublink{1734}{sIdP}, \githublink{1742}{sRefl}}
\item Reflection: For $a, b \in \Tm(\Gamma, A)$, $a = b$ if $\Tm(\Gamma, \El(\Id_A(a, b)))$ is nonempty.\par\hfill{\footnotesize Specification: \githublink{798}{IdF} (\githublink{807}{reflection}); model: \githublink{1747}{sReflection}}
\end{enumerate}

\subsection*{W-types}
\begin{enumerate}\setcounter{enumi}{13}
\item $\W(A, B) \in \Ty(\Gamma)$.

For $F \defto \Pi(B, \W(A, B)[\pproj \circ \pproj]) \in \Ty(\Gamma.A)$, $\Wsup \in \Tm(\Gamma.A.F, \W(A, B)[\pproj \circ \pproj])$.
Here $\pproj \circ \pproj$ is the double projection, which makes $\W(A, B)[\pproj \circ \pproj]$ a type over $\Gamma.A.B$ and over $\Gamma.A.F$.

For $C \in \Ty(\Gamma.\W(A, B))$ and $c \in \Tm(\Gamma, M)$, where
\begin{equation*}
M \defto \pit{a\of{}A} \pit{f\of{}\pit{b\of{}B\,a} \W(A, B)} \big(\pit{b\of{}B\,a} C(f\,b)\big) \to C(\Wsup(a, f))\;,
\end{equation*}
$\rec(C, c) \in \Tm(\Gamma.\W(A, B), C)$.

The computation rule is
\begin{equation*}
\rec(C, c)(\Wsup(a, f)) = c\,a\,f\,(\lambda b.\,\rec(C, c)(f\,b))\;.
\end{equation*}
\par\hfill{\footnotesize Specification: \githublink{676}{WF} (\githublink{678}{wty}, \githublink{680}{wty\_sub}, \githublink{683}{sup}, \githublink{685}{sup\_sub}, \githublink{696}{wrec}, \githublink{699}{wrec\_sup}, \githublink{708}{wrec\_sub}); model: \githublink{1763}{sW}, \githublink{1773}{supVal}, \githublink{1792}{sSupT}, \githublink{1893}{wrecF}, \githublink{1848}{wNodeAt}, \githublink{1905}{wrecF\_node}, \githublink{1928}{wrecF\_sub}}
\end{enumerate}

\subsection*{Quotients}
\begin{enumerate}\setcounter{enumi}{14}
\item Quotients: For $R \in \Tm(\Gamma.A.A[\pproj], \typrop)$, $\Quot(A, R) \in \Ty(\Gamma)$, with $\Quot(A, R)[\sigma] = \Quot(A[\sigma], R[(\sigma^+)^+])$. $\mk \in \Tm(\Gamma.A, \Quot(A, R)[\pproj])$.\par\hfill{\footnotesize Specification: \githublink{837}{QuotF} (\githublink{839}{quot}, \githublink{841}{quot\_sub}, \githublink{844}{qmk}, \githublink{847}{qmk\_sub}); model: \githublink{2029}{sQuot}, \githublink{2105}{sQuot\_sub}, \githublink{2037}{sQmk}, \githublink{1388}{QS}, \githublink{1389}{cls}, \githublink{1407}{rep}}
\item Lifting: For $C \in \Ty(\Gamma)$, $f \in \Tm(\Gamma.A, C[\pproj])$ and a term $h$ of
\begin{equation*}
\El\big(\all{a,b\of{}A} R\,a\,b \lto \Id(f\,a, f\,b)\big)\;,
\end{equation*}
$\qlift(f, h) \in \Tm(\Gamma.\Quot(A, R), C[\pproj])$. $\qlift(f, h)[\langle \pproj, \mk\rangle] = f$.\par\hfill{\footnotesize Specification: \githublink{837}{QuotF} (\githublink{853}{qlift}, \githublink{857}{qlift\_mk}, \githublink{864}{qlift\_sub}), \githublink{828}{respIn}; model: \githublink{2060}{sQlift}, \githublink{2042}{resp\_eqc}}
\item Induction: For $P \in \Tm(\Gamma.\Quot(A, R), \typrop)$ and $e \in \Tm(\Gamma.A, \El(P[\langle \pproj, \mk\rangle]))$, $\mathsf{ind}(e) \in \Tm(\Gamma.\Quot(A, R), \El(P))$.\par\hfill{\footnotesize Specification: \githublink{837}{QuotF} (\githublink{876}{qind}), \githublink{967}{qind\_sub}; model: \githublink{2068}{sQind}}
\item Soundness: $\mathsf{sound} \in \Tm\big(\Gamma, \El\big(\all{a,b\of{}A} R\,a\,b \lto \Id(\mk\,a, \mk\,b)\big)\big)$.\par\hfill{\footnotesize Specification: \githublink{837}{QuotF} (\githublink{880}{qsound}), \githublink{828}{respIn}, \githublink{980}{qsound\_sub}; model: \githublink{2080}{sQsound}}
\end{enumerate}

\subsection*{Codes}
\begin{enumerate}\setcounter{enumi}{18}
\item For each type former, a code in $\tytype$ whose $\El$ is the former applied to the decodings. For example, for $X \in \Tm(\Gamma, \tytype)$ and $Y \in \Tm(\Gamma.\El(X), \tytype)$, $\pi(X, Y) \in \Tm(\Gamma, \tytype)$. $\El(\pi(X, Y)) = \Pi(\El(X), \El(Y))$. The formers are $\Pi$, $\Sigma$, $\mathbf{0}$, $\mathbf{1}$, $\mathbf{2}$, W-types and quotients.\par\hfill{\footnotesize Specification: \githublink{1028}{Codes} (\githublink{1031}{piCode}, \githublink{1032}{El\_piCode}, \githublink{1034}{piCode\_sub}, \githublink{1037}{sigmaCode}, \githublink{1043}{zeroCode}, \githublink{1047}{oneCode}, \githublink{1051}{twoCode}, \githublink{1055}{wCode}, \githublink{1061}{quotCode}), \githublink{1085}{TModel} (\githublink{1094}{tCodes}); model: \githublink{2140}{stCodes}}
\end{enumerate}

\subsection*{Choice, excluded middle and propositional extensionality}
\begin{enumerate}\setcounter{enumi}{19}
\item Global choice: A closed term $\choice$ of $\pit{X\of{}\tytype} \El(\supp{X}) \to \El(X)$.\par\hfill{\footnotesize Specification: \githublink{1085}{TModel} (\githublink{1096}{choiceT}), \githublink{1020}{truncT}; model: \githublink{2218}{sChoice}, \githublink{2176}{trunc\_le}, \githublink{2204}{chooseAt}}
\item Refutation of excluded middle: A closed term of $\El\big(\neg\,\all{P\of{}\typrop} P \vee \neg P\big)$.\par\hfill{\footnotesize Specification: \githublink{1085}{TModel} (\githublink{1101}{notEM}), \githublink{786}{EMT}, \githublink{778}{negT}; model: \githublink{2357}{sNotEM}, \githublink{2333}{em\_not\_le}}
\item Refutation of propositional extensionality: A closed term of $\El\big(\neg\,\all{P,Q\of{}\typrop} (P \liff Q) \lto \Id(P, Q)\big)$.\par\hfill{\footnotesize Specification: \githublink{1085}{TModel} (\githublink{1103}{notPropext}), \githublink{821}{propextT}; model: \githublink{2352}{sNotPropext}, \githublink{2303}{propext\_not\_le}, \githublink{2367}{stageNZ}}
\end{enumerate}

\subsection*{Consistency}
There is no closed term of $\El(\bot)$.\par\hfill{\footnotesize Specification: \githublink{1085}{TModel} (\githublink{1105}{consistent}), \githublink{777}{botT}; model: \githublink{2363}{sConsistent}, \githublink{2243}{bot\_none}}

The lemmas \githublink{2429}{sEl\_mkCode}, \githublink{2435}{mkCode\_sEl} and \githublink{2439}{mkCode\_sub} state that the universe $\tytype$ classifies the $\kappa$-small types in the sense of \cref{sec:why-precomposition}: decoding is a bijection between codes and $\kappa$-small types, natural in the context.
The lemma \githublink{2445}{univ\_classifies} packages the three.

\end{document}